\newcommand{\E}{{\bf{E}}}
\newcommand{\PP}{{\bf{P}}}
\newtheorem{tm}{Theorem}
\newtheorem{lem}{Lemma}
\begin{document}

\parindent=0pt

\smallskip
\centerline{\LARGE \bfseries Correlation between clustering and degree}
\smallskip
\centerline{\LARGE \bfseries in affiliation networks}

\par\vskip 3.5em
\centerline{Mindaugas Bloznelis and Justinas Petuchovas}
\vskip 0.5truecm

\centerline{Faculty of Mathematics and Informatics, Vilnius University,}
\centerline{Naugarduko 24, Vilnius 03225, Lithuania}
\vskip 1truecm
\centerline{E-mail: {\it mindaugas.bloznelis@mif.vu.lt}}
\vskip 1truecm

\begin{abstract}We are interested in the probability
that two randomly 
selected neighbors
of a random vertex of degree (at least) $k$ are  adjacent. 
We 
evaluate this 
probability for a power law random intersection graph, where  each vertex is 
prescribed a collection of attributes and two vertices are 
adjacent whenever they
share a common attribute. 
We show that the probability obeys the scaling
$k^{-\delta}$ as $k\to+\infty$. Our results are mathematically rigorous.
The parameter $0\le \delta\le 1$ 
is determined by the tail indices of power law random weights 
defining the links between vertices and 
attributes.
\end{abstract}

\bigskip
{\bf Keywords}:   clustering coefficient, degree distribution, random intersection graph, affiliation network, complex network.

\bigskip

\section{Introduction and Results}

It looks plausible, that in a social network the chances of two neighbors of 
a given actor to be adjacent 
is a decreasing function of actor's degree 
(the total number of its neighbors).
Empirical evidence  of this phenomenon has been reported in a number of papers, see, e.g.,
\cite{Eckmann2002},
\cite{Vazquez2002}, \cite{RavaszB2003}, \cite{Foudalis2011}. 
Theoretical explanations 
 have been derived 
in  \cite{DorogovtsevGM2002} and \cite{RavaszB2003} with the aid of 
 a hierarchical deterministic network model,  and in \cite{Bloznelis2013} with the aid
of  
a random intersection graph model of an affiliation network.
We note that theoretical results \cite{DorogovtsevGM2002}, \cite{RavaszB2003} and
 \cite{Bloznelis2013} only address the scaling $k^{-1}$, i.e., $\delta=1$. In particular, they do not explain empirically observed   scaling $k^{-\delta}$ with $\delta\approx 0.75$ reported in  \cite{Vazquez2002}, see also 
 \cite{Foudalis2011}.
  In the 
present paper we develop further the approach of  
 \cite{Bloznelis2013} and address  the range
$0\le \delta<1$. The development resorts to a more realistic fitness model of an 
affiliation network that accounts for variable activities of actors and 
attractiveness of
attributes described below.

An affiliation network defines adjacency relations between actors by using
an auxiliary set of attributes. Let $V=\{v_1,\dots, v_n\}$ denote the set 
of actors (vertices)
and $W=\{w_1,\dots, w_m\}$ denote the set of attributes. 
Every actor $v_i$ is 
prescribed a collection  of attributes and two actors $v_i$ and $v_j$
are declared adjacent in the network if they share a common attribute.
For example, in the 
film actor network two actors are adjacent if they 
have played in the same movie, in 
the collaboration network two scientists are adjacent if 
they have coauthored a publication, in the consumer copurchase network two 
consumers are adjacent if they have purchased similar products.

A convenient model of a large affiliation network is obtained by 
linking (prescribing) attributes to actors at random \cite{godehardt2001},
\cite{karonski1999},
\cite{NewmanSW2001}).  
Furthermore, in order to model the  heterogeneity of human activity, 
 we assign every actor $v_j$ a random 
weight $Y_j$ reflecting its activity. Similarly, 
a random  weight $X_i$ is assigned 
to an attribute $w_i$ to model its attractiveness. 
Now $w_i$ is linked to 
 $v_j$ at random 
and with probability proportional to the attractiveness $X_i$ and activity $Y_j$. 
The random affiliation network obtained in this way is called
a  random intersection graph, see  
\cite{BloznelisGJKR2015}. 

We assume in what follows that $X_0, X_1, \dots, X_m, Y_0,Y_1, \dots, Y_n$ are
 independent non-negative random 
variables. Furthermore, each $X_i$ (respectively $Y_j$) has the  same 
probability distribution denoted 
$P_X$ (respectively $P_Y$). Given realized values 
$X = \{X_i\}_{i=1}^m$ and $Y = \{Y_j\}_{j=1}^n$ 
we define the random bipartite
graph $H_{X,Y}$ with the bipartition  $W\cup V$, 
where links $\{w_i, v_j\}$ are inserted with
probabilities $p_{ij} = \min\{1, X_i Y_j / \sqrt{nm}\}$ independently for each
$(i,j) \in [m] \times [n]$. The random intersection graph
${\cal G}=G(P_X, P_Y, n, m)$ defines the adjacency relation on the vertex set $V$:
vertices $v', v'' \in V$ are declared adjacent (denoted $v' \sim v''$) whenever
$v'$ and $v''$ have a common neighbor in $H_{X,Y}$. 
Such a neighbor belongs to the set $W$ and it is called a witness 
of the edge $v'\sim v''$. We note that for $n,m\to+\infty$ 
satisfying $m/n\to\beta$ 
for some $\beta>0$, 
the random intersection graph ${\cal G}$ admits a tunable global 
clustering coefficient and power 
law degree distribution 
  \cite{BloznelisDamarackas2013},
\cite{BloznelisKurauskas2017}.

Next we introduce
 network characteristics studied in this paper.
Given a finite graph $G$ and integer $k=2,3,\dots$,
%
%
%
define the clustering
coefficients
\begin{eqnarray}\label{=k}
 c_G(k)
&
=
& 
\PP
\bigl(
v_2^*\sim v_3^*\bigr| v_2^*\sim v_1^*, v_3^*\sim v_1^*, d(v_1^*)=k
\bigr),
\\
\label{>=k}
C_G(k)
&
=
& 
\PP
\bigl(
v_2^*\sim v_3^*\bigr| v_2^*\sim v_1^*, v_3^*\sim v_1^*, d(v_1^*)\ge k
\bigr).
\end{eqnarray}
Here $(v_1^*, v_2^*, v_3^*)$ is an ordered triple 
of vertices of $G$ drawn uniformly at random,
  $d(v)$ denotes 
the degree of a vertex $v$. Note that for a 
{\it deterministic} graph $G$,
coefficients (\ref{=k}) and (\ref{>=k})
are the respective ratios of  subgraph counts
\begin{equation}\label{2017-06-04}
\frac
{
\sum_{v:\, d(v)=k} N_{\Delta}(v)
} 
{
\sum_{v:\, d(v)=k}{{d(v)}\choose{2}}
}
\qquad
\
{\rm{and}}
\qquad
\
\frac
{
\sum_{v:\, d(v)\ge k} N_{\Delta}(v)
} 
{
\sum_{v:\, d(v)\ge k}{{d(v)}\choose{2}}.
}
\end{equation}
 Here $N_{\Delta}(v)$ and ${{d(v)}\choose {2}}$
are the numbers of triangles and cherries incident to $v$. Differently, for the {\it random} graph 
${\cal G}$ the conditional probabilities (\ref{=k}) and (\ref{>=k}) refer to the 
two 
sources of randomness: the random sampling of vertices  $(v_1^*, v_2^*, v_3^*)$ 
and the randomly graph generation mechanism. From the fact that the probability 
distribution  of ${\cal G}$ is invariant under permutation of its 
vertices we obtain 
that 
\begin{eqnarray}\label{2017-06-04+1}
c_{\cal G}(k)
&=&
\PP
\bigl(
v_2\sim v_3\bigr| v_2\sim v_1, v_3\sim v_1, d(v_1)=k
\bigr),
\\
\label{2017-06-04+2}
C_{\cal G}(k)
&=&
\PP
\bigl(
v_2\sim v_3\bigr| v_2\sim v_1, v_3\sim v_1, d(v_1)\ge k
\bigr).
\end{eqnarray}
An argument bearing on the
law of large numbers suggests that for large $n,m$  the ratios (\ref{2017-06-04}) can be approximated by respective  probabilities 
(\ref{2017-06-04+1}) and (\ref{2017-06-04+2}).

Our Theorem 2 below establishes a first order asymptotics as 
$n,m\to+\infty$ of the probabilities (\ref{2017-06-04+1}) and (\ref{2017-06-04+2})
\begin{eqnarray}\label{T1-21-1}
c_{\cal G}(k)
&=&
\left(1+\beta^{1/2} b(k)a^{-1}(k)\right)^{-1}+o(1),
\\
\label{T1-21-2}
C_{\cal G}(k)
 &=& 
\left(1+\beta^{1/2} B(k)A^{-1}(k)\right)^{-1}+o(1).
\end{eqnarray}
Here $a(k), b(k)$ and  $A(k)$, $B(k)$ are defined in Theorem 2 below.
Our Theorem 1 describes the dependence on $k$ of the leading term of (\ref{T1-21-2}). Namely, for a power law distributions $P_X$ and $P_Y$ 
the leading term of (\ref{T1-21-2}) obeys the scaling
 $k^{-\delta}$.

\begin{tm}\label{T0} Let $\alpha,\gamma>5$ and $\beta, c_X, c_Y>0$. Let $m,n \to \infty$. 
Assume that $m/n\to \beta$. 
 Suppose  that as $t\to+\infty$
\begin{equation}
 \PP(X>t)=(c_X+o(1))t^{-\alpha},
\qquad\qquad
\PP(Y>t)=(c_Y+o(1))t^{-\gamma}.
\end{equation}
Then for $\delta=((\alpha-\gamma-1)\wedge 1)\vee (-1)$  we have as $k\to+\infty$
\begin{equation}\label{T0+}
 \frac{B(k)}{A(k)}=(c+o(1))k^{\delta}.
\end{equation}
The constant $c=c(\alpha,\gamma,\beta,c_X,c_Y)>0$ admits an explicit expression in terms of 
$\alpha,\gamma,\beta, 
c_X, c_Y$.
\end{tm}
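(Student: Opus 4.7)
The plan is to start from the explicit formulas for $A(k)$ and $B(k)$ supplied by Theorem~2 and to carry out an asymptotic analysis of the ratio $B(k)/A(k)$ as $k\to+\infty$ under the power law assumptions on $P_X$ and $P_Y$. Both $A(k)$ and $B(k)$ should be expressible as expectations of elementary functionals of the weights $X$ and $Y$ alone: roughly, $A(k)$ measures the expected number of triangles attached to a central vertex of degree at least $k$, while $B(k)$ measures the expected number of open cherries. The scaling in $k$ comes from the way the size-biasing imposed by the event $\{d(v_1)\ge k\}$ reshapes the joint law of the activity $Y_1$ of the central vertex and of the attractiveness weights of the attributes incident to it.

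First I would identify the two mechanisms that can produce the event $\{d(v_1)\ge k\}$. Either $Y_1$ itself is large, in which case $v_1$ is linked to a large number of attributes each of moderate attractiveness and its neighbors are scattered across many distinct witnesses, generating mostly open cherries and feeding $B(k)$; or some attribute $w_i$ incident to $v_1$ has a very large $X_i$, so that $v_1$ shares this single popular witness with a large crowd of neighbors, generating mostly triangles and feeding $A(k)$. A conditioning argument, combined with concentration of $d(v_1)$ around its conditional mean given $(Y_1,X_1,\dots,X_m)$, lets one replace the conditioning event by a deterministic condition on $Y_1$ and on the largest $X_i$, up to a negligible error.

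After this reduction each of $A(k)$ and $B(k)$ splits into a ``large $Y$'' contribution and a ``large $X$'' contribution, both of which are Karamata-type integrals against the respective power law tail. Each piece contributes a power of $k$ whose exponent is linear in $\alpha$ or $\gamma$, and matching the dominant ones gives $B(k)/A(k)\sim c\,k^{\alpha-\gamma-1}$ in the generic regime. The truncations at $\delta=1$ and $\delta=-1$ arise because in the extreme cases one mechanism fully dominates: once $\alpha-\gamma-1>1$, the leading contribution to $B(k)$ comes from the bulk rather than from the tail of $P_X$ (so the exponent ceases to depend on $\alpha$), and symmetrically for $\alpha-\gamma-1<-1$ the leading contribution to $A(k)$ becomes insensitive to~$\gamma$. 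In all three regimes the constant $c$ falls out of the Karamata integrals and can be written in terms of Beta and Gamma factors built from $\alpha,\gamma,\beta,c_X,c_Y$.

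The main obstacle is uniform control of error terms in the transition region where the two mechanisms contribute at comparable orders, together with the passage from the in-probability concentration of $d(v_1)$ to a deterministic condition usable inside the Karamata integration. The hypothesis $\alpha,\gamma>5$ plays a purely technical role here: it supplies enough moments to truncate the weight tails, to bound higher-order corrections by a second-moment estimate, and, through a union bound, to rule out the degenerate scenario where two independent weights incident to $v_1$ are simultaneously anomalously large. Once these estimates are in place, the three cases of $\delta$ are handled by the same template, and the constant $c$ is obtained by matching the leading coefficient of the dominant Karamata integral in each case.
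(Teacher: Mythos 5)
Your outline captures the right global picture---two competing mechanisms (large activity $Y_1$ versus one anomalously attractive attribute), each producing a pure power law, with the heavier tail winning and the three regimes of $\delta$ coming from which mechanism dominates in $A(k)$ and in $B(k)$. But as a proof it has a genuine gap: it never engages with what $A(k)$ and $B(k)$ actually are. By Theorem~2 these are tail probabilities of sums of \emph{independent limit random variables}, $A(k)\propto \PP(d_*^{(1)}+\Lambda_1^{(3)}\ge k-2)$ and $B(k)\propto \PP(d_*^{(2)}+\Lambda_1^{(2)}+\Lambda_2^{(2)}\ge k-2)$, so the whole theorem reduces to three concrete tail computations that your sketch replaces with an unexecuted narrative about conditioning on $\{d(v_1)\ge k\}$ in the finite graph. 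The two ingredients you would have to supply, and do not, are precisely the content of the paper's proof: (i) the tail asymptotics of the randomly stopped sum $d_*^{(r)}=\sum_{j=1}^{\Lambda_0^{(r)}}\tau_j$, where \emph{both} the number of summands and the summands are regularly varying --- this is not a concentration-around-the-conditional-mean statement (the $\tau_j$ are themselves heavy tailed, so a single-big-jump analysis or the cited result of Ale{\v s}kevi{\v c}ien{\. e}--Leipus--{\v S}iaulys is required); and (ii) the tail of the size-biased mixed Poisson variables $\Lambda_k^{(r)}$ (the paper's Lemma~3, which needs Chernoff-type Poisson tail bounds to show the mixing distribution's tail passes through). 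Without these, ``Karamata-type integrals'' is a label, not an argument; and no Beta/Gamma factors appear --- the constants are elementary moment ratios.

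There is also a concrete sign that the computation was not carried out: you attribute the truncation at $\delta=1$ to a change in the behavior of $B(k)$ and the truncation at $\delta=-1$ to a change in $A(k)$, but the roles are reversed. One finds $\PP(d_*^{(1)}\ge k)\asymp k^{1-\gamma}$, $\PP(\Lambda_1^{(3)}\ge k)\asymp k^{3-\alpha}$, $\PP(d_*^{(2)}\ge k)\asymp k^{2-\gamma}$, $\PP(\Lambda_1^{(2)}\ge k)\asymp k^{2-\alpha}$; hence $A(k)$ switches mechanism at $\alpha=\gamma+2$ (producing the cap $\delta=1$, consistent with the paper's remark that large attribute cliques start to dominate the triangle count there), while $B(k)$ switches at $\alpha=\gamma$ (producing the floor $\delta=-1$). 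Finally, the hypothesis $\alpha,\gamma>5$ is not about ruling out two simultaneously large weights; it guarantees the fourth moments needed to invoke Theorem~2 and ensures $r<\alpha$, $r<\gamma$ in the moment-tilted tail formulas with $r$ up to $3$.
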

It follows from (\ref{T0+}) that for large $n$ and $m$ the clustering coefficient
 $C_{\cal G}(k)$ obeys the scaling $k^{-\delta}$, where $0\le \delta\le 1$. 
A related result establishing $k^{-1}$ scaling for $c_{\cal G}(k)$ has been shown
in \cite{Bloznelis2013} in the case where $P_Y$ is heavy tailed 
and $P_X$ is degenerate ($P(X_i=x)=1$ for some $x>0$).

We note the ``phase transition'' in the scaling $k^{-\delta}$ at 
$\alpha=\gamma+2$: for 
$\alpha\ge \gamma+2$ we have $\delta=1$ and for $\alpha<\gamma+2$ we have
$\delta<1$. 
Our explanation of this phenomenon is as follows.
Every attribute $w_i$ forms a clique in ${\cal G}$ induced by vertices 
linked to $w_i$. Given the weight $X_i$ (of $w_i$), the expected size of the clique is proportional to $X_i$.
Now, 
for relatively small $\alpha$ (namely, $\alpha<\gamma +2$) the sequence
$X_1,X_2,\dots, X_m$ contains sufficiently many large weights so that the 
corresponding
large cliques (formed by attributes)
 have a tangible effect on the probability (\ref{>=k}). Indeed,
large cliques may increase the value of (\ref{>=k}) considerably.

The proof of Theorem \ref{T0}
uses known
results about the
tail asymptotics of randomly stopped sums of 
heavy tailed independent
random variables in the case
 where the random number of summands is heavy tailed
\cite{AleskevicieneLeipusSiaulys}. 
Similar results are likely to be true also for 
the local probabilities of randomly stopped sums (work in progress)\footnote{Update: Technical report {\it "Local probabilities of randomly stopped sums
of power law lattice random variables"} available at 
http://arxiv.org/abs/1801.01035}.  They would extend Theorem 1  to 
 $c_{\cal G}(k)$ as well.

Before formulating Theorem 2 we introduce some more notation.
We denote $a_r = \E X_0^r, \, b_r = \E Y_0^r$. 
Let $\beta\in (0,+\infty)$.
Let $\Lambda_k$, $k=0,1,2$ be mixed Poisson random variables
with the distributions 
\begin{displaymath}
\PP(\Lambda_k=s)=\E e^{-\lambda_k}\lambda_k^s/s!,
\qquad
 s=0,1,\dots.
 \end{displaymath}
Here $\lambda_0=Y_1\beta^{1/2}a_1$ and 
$\lambda_k=X_k\beta^{-1/2}b_1$ for $k=1,2$. 
Furthermore, for $r=0,1,2,\dots$ and $k=0,1,2$, 
let $\Lambda_k^{(r)}$  be a non-negative integer valued random variable with
the distribution
\begin{displaymath}
\PP(\Lambda_k^{(r)}=s)
=
\bigl(
\E \lambda^r_k
\bigr)^{-1}
\E\Bigl( e^{-\lambda_k}\lambda_k^{s+r}/s!\Bigr),
\qquad
s=0,1,2,\dots.
\end{displaymath}
Note that $\Lambda_k^{(0)}$ have the same probability distribution as $\Lambda_k$.
Let $\tau_i$, $i\ge 1$ 
be random  variables with the  probability distribution 
\begin{displaymath}
\PP(\tau_i=s)=\frac{s+1}{\E\Lambda_1}\PP(\Lambda_1=s+1),
\qquad
s=0,1,2\dots.
\end{displaymath}
Assuming that random variables 
$\{\tau_i$, $i\ge 1\}$ are independent of 
$\Lambda_0^{(r)}$ we introduce the random variables
\begin{equation}\label{d-star}
d_*^{(r)}=\sum_{j=1}^{\Lambda_0^{(r)}}\tau_j,
\qquad
r=0,1,2.
\end{equation}
We denote for short $d_*=d_*^{(0)}=\sum_{j=1}^{\Lambda_0}\tau_j$.


\begin{tm}\label{T1} Let $m,n \to \infty$. 
Assume that $m/n\to \beta$ for some $\beta\in (0,+\infty)$.
 Suppose  that $\E X_1^4<\infty$ and $\E Y_1^4<\infty$. 
Then for each integer $k\ge 2$ relations 
(\ref{T1-21-1}) and (\ref{T1-21-2}) hold with
\begin{eqnarray}
\nonumber
&&
a(k)
=
a_3b_1^3
\PP\bigl(d_*^{(1)}+\Lambda_1^{(3)}=k-2\bigr),
\ 
\
b(k)
=
a_2^2b_1^2b_2
\PP\bigl(d_*^{(2)}+\Lambda_1^{(2)}+\Lambda_2^{(2)}=k-2\bigr),
\\
\nonumber
&&
A(k)
=
a_3b_1^3
\PP\bigl(d_*^{(1)}+\Lambda_1^{(3)}\ge k-2\bigr),
\ 
\,
B(k)
=
a_2^2b_1^2b_2
\PP\bigl(d_*^{(2)}+\Lambda_1^{(2)}+\Lambda_2^{(2)}\ge k-2\bigr).
\qquad
\end{eqnarray}
Here we assume that random variables $d_*^{(1)}$ and $\Lambda_1^{(3)}$ 
are independent. Furthermore, we assume that random variables $d_*^{(2)}$,
 $\Lambda_1^{(2)}$ and $\Lambda_2^{(2)}$ are independent and 
 $\Lambda_2^{(2)}$ has the same distribution as $\Lambda_1^{(2)}$.
\end{tm}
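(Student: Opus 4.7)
The plan is to rewrite (\ref{2017-06-04+1}) and (\ref{2017-06-04+2}) as ratios of joint probabilities and then to decompose both the triangle event (numerator) and the cherry event (denominator) according to the \emph{witness structure}, i.e.\ the collection of attributes witnessing the relevant edges. In the sparse regime $p_{ij}=O((nm)^{-1/2})$ only two minimal witness configurations contribute to leading order: (I)~a single attribute $w_i$ is linked to all three of $v_1,v_2,v_3$, so that $w_i$ witnesses the cherry and simultaneously closes the triangle; and (II)~two distinct attributes $w_a\neq w_b$ witness the two cherry edges separately, with $w_a$ linked to $v_1,v_2$ only and $w_b$ linked to $v_1,v_3$ only. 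Crucially, configuration~(II) contributes fully to the cherry probability but contributes to the triangle probability only in conjunction with an additional, independent witness for $v_2\sim v_3$, which is a lower-order event. Hence, to leading order, the triangle probability equals the probability of~(I), while the cherry probability equals the sum of the probabilities of~(I) and~(II); taking the ratio will give $(1+\beta^{1/2}b(k)/a(k))^{-1}$.

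The second step is to evaluate the two configuration probabilities by Poisson approximation. For~(I) I would sum over the choice of the special attribute $w_i$ and expand $p_{1i}p_{2i}p_{3i}\approx X_i^3 Y_1Y_2Y_3/(nm)^{3/2}$; summing over $i$ and averaging over $Y_2,Y_3$ yields the prefactor $m\cdot a_3 Y_1 b_1^2\cdot(nm)^{-3/2}$, with the weight $X_i$ of the chosen attribute size-biased by $X_i^3$. Conditional on $v_1,v_2,v_3\in T_{w_i}$, the degree of $v_1$ decomposes as $2+(|T_{w_i}|-3)+\sum_{w\in S_1\setminus\{w_i\}}|T_w\setminus\{v_1\}|$, whose asymptotic law is $2+\Lambda_1^{(3)}+d_*^{(1)}$. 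Here the size-biasing of $|S_1|$ by one (because $w_i$ is a chosen attribute of $v_1$) produces the $\Lambda_0^{(1)}$ count of remaining attributes, while a Palm-type identity for mixed Poisson identifies the conditional law of $|T_w\setminus\{v_1\}|$, given $v_1\in T_w$, as that of $\tau_j$. Matching the prefactor recovers $a(k)$. For~(II) a parallel computation with two independent witness attributes produces the prefactor $m^2 a_2^2 Y_1^2 b_1^2(nm)^{-2}$; averaging of $Y_1^2$ gives $b_2$ and size-biases $Y_1$ by $Y_1^2$, while the degree splits as $2+\Lambda_1^{(2)}+\Lambda_2^{(2)}+d_*^{(2)}$, yielding $b(k)$. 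The factor $\beta^{1/2}$ multiplying $b(k)/a(k)$ is exactly the ratio of the two scaling prefactors, $m^2(nm)^{-2}/(m(nm)^{-3/2})=\sqrt{m/n}\to\beta^{1/2}$.

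For $C_{\cal G}(k)$ I would sum the preceding local identities over $k'\ge k$; because the Poisson approximation for $d(v_1)$ is uniform in $k$, this converts probability masses into tail probabilities, producing $A(k),B(k)$ in place of $a(k),b(k)$. The main technical obstacle will be the quantitative control of the $o(1)$ errors in the Poisson approximation. Several effects have to be shown negligible: multiple witnesses for a single cherry edge, overlaps among the sets $T_w\setminus\{v_1\}$ for different $w\in S_1$ (which cause $d(v_1)$ to differ slightly from $\sum_{w\in S_1}|T_w\setminus\{v_1\}|$), atypically large weights forcing $p_{ij}=1$ rather than $X_iY_j/\sqrt{nm}$, and joint occurrence of configurations~(I) and~(II). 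Standard tools — truncation of $X_i,Y_j$ at a slowly growing threshold together with a Le~Cam-type Poisson bound — should suffice; the fourth-moment hypothesis $\E X_1^4<\infty$, $\E Y_1^4<\infty$ is exactly what is needed, since the dominant sums involve factors up to $X^3$ and $Y^2$ and one additional power is required to bound the second moments of the truncation errors.
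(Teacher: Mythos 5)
Your plan coincides with the paper's proof: the same witness-structure decomposition into a single common attribute versus two distinct witnessing attributes (the paper's events $B_1$, $B_2$, with the three-witness triangle configuration and all overlap/double-witness events shown to be $O(n^{-3})$), the same Poisson approximation via Le Cam's inequality with the same size-biased limit laws $\Lambda_1^{(3)}, d_*^{(1)}$ and $\Lambda_1^{(2)},\Lambda_2^{(2)},d_*^{(2)}$, and the same origin of the $\beta^{1/2}$ factor as the ratio of the prefactors $m(nm)^{-3/2}$ and $m^2(nm)^{-2}$. The remaining work is only to carry out the error estimates you correctly enumerate, exactly as the paper does in its Lemmas 1 and 2.
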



\section{Proof}
We first prove Theorem \ref{T1} and then Theorem \ref{T0}.
Before the proof  we introduce some notation. We denote 
$\{1,2,\dots, r\}=[r]$ and  $(x)_k=x(x-1)\cdots(x-k+1)$.
We denote by $\{w_i \to v_j\}$  the event that  $w_i$ and $v_j$ 
are neighbors in the bipartite graph  $H = H_{X,Y}$. We denote
\begin{displaymath}
  {\mathbb I}_{ij} = {\mathbb I}_{\{w_i \to v_j\}}, \qquad
  \lambda_{ij}=\frac{X_iY_j}{\sqrt{mn}}.
 \end{displaymath}
Let $\PP^*=\PP_{X_1,Y_1}$ and $\PP^{**}=\PP_{X_1,X_2,Y_1}$ 
denote the 
conditional probabilities given $X_1,Y_1$ and $X_1,X_2,Y_1$ 
respectively. Furthermore, for $i=1,2$, we denote by $\PP_{X_i}$ 
and $\PP_{Y_i}$ the conditional probabilities given $X_i$ and 
$Y_i$ respectively.

{\it Proof of Theorem \ref{T1}.} We only prove 
(\ref{T1-21-1}). The proof of (\ref{T1-21-2}) is much the same.
Introduce events
\begin{displaymath} 
A = \{v_1 \sim v_2, \, v_1 \sim v_3, \, v_2 \sim v_3\}, 
\quad
B = \{v_1 \sim v_2, \, v_1 \sim v_3\}, 
\quad
K = \{d(v_1) = k\}.
\end{displaymath}
We derive (\ref{T1-21-1}) from the identity
\begin{equation}
  \PP(v_2 \sim v_3 \mid v_1 \sim v_2, \, v_1 \sim v_3, \, d(v_1) = k) 
  = \frac{\PP(A \cap K)}{\PP(B \cap K)}
\end{equation}
combined with the relations shown below
\begin{eqnarray}\label{A1+++}
&&
\PP(A \cap K)
=
n^{-2}\beta^{-1/2} a(k)+o(n^{-2}),
\quad
\qquad
\qquad
\\
\label{B1+++}
&&
\PP(B \cap K)
=
n^{-2}\beta^{-1/2} a(k)
+
n^{-2}b(k)+o(n^{-2}).
\end{eqnarray}

Proof of (\ref{A1+++}) and (\ref{B1+++}). 
Introduce the sets of indices
\begin{eqnarray}
\nonumber
&&
{\cal C}_1=[m],
\qquad
{\cal C}_2=\{(i,j):\, i\not=j;\ i,j\in [m]\},
\\
\nonumber
&&
{\cal C}_3=\{(i,j,k):\, i\not=j\not=k\not=i;\ i,j,k\in [m]\}
\end{eqnarray} 
and split
\begin{eqnarray}\nonumber
&&
B=B_1\cup B_2,
\quad
A=B_1\cup B_3,
\quad
B_k = \bigcup_{x\in {\cal C}_k} B_{k.x},
\quad
k=1,2,3,
\end{eqnarray}
where
\begin{eqnarray}
  &&
 B_{1.i} = \{w_i \to v_1, \, w_i \to v_2, \, w_i \to v_3\}, \nonumber 
  \\
  \nonumber
 &&
  B_{2.(i,j)} = \{w_i \to v_1, \, w_i \to v_2, \, w_j \to v_1, \,  w_j \to v_3\},
 \\
  \nonumber 
&& 
B_{3.(i,j,k)} 
 = 
\{w_i \to v_1, \, w_i \to v_2, \, w_j \to v_1, \,  w_j \to v_3, 
\,
w_k \to v_2, \, w_k \to v_3\}.
\end{eqnarray}
We write
\begin{eqnarray}\label{A+}
&&
\PP(A \cap K)
= 
\PP(B_1 \cap K) 
+
\PP((B_3\cap K) \setminus B_1),
\\
\label{B+} 
&&
\PP(B \cap K)
= 
\PP(B_1 \cap K) + \PP(B_2 \cap K) 
-
\PP(B_1 \cap B_2 \cap K)
 \end{eqnarray}
and evaluate $\PP(B_k\cap K)$, for $k=1,2$,  
using inclusion-exclusion, 
\begin{equation}
\label{B++} 
\sum_{x\in {\cal C}_k}\PP(B_{k.x} \cap K)
-
\sum_{\{x,y\} \subset {\cal C}_k} \PP(B_{k.x}\cap B_{k.y})
\leq 
\PP(B_k\cap K)
\le 
\sum_{x\in {\cal C}_k}\PP(B_{k.x} \cap K). 
\end{equation}
 We show in Lemma \ref{L1} below that the quantities
\begin{eqnarray}
\label{R_i}
&&
R_{k}:=
\sum_{\{x,y\} \subset {\cal C}_k} \PP(B_{k.x}\cap B_{k.y}),
\quad
k=1,2,
\\
\nonumber
&&
R_3:=\PP((B_3\cap K) \setminus B_1),
\qquad
R_4:=
\PP(B_1 \cap B_2 \cap K)
\end{eqnarray}
are negligibly small. More precisely, we establish the bounds  $R_i=O(n^{-3})$, $1\le i\le 4$. Invoking 
these bounds in (\ref{A+}), 
(\ref{B+}), (\ref{B++}) we obtain
 \begin{eqnarray}\label{A+++}
\PP(A \cap K)
&=& 
\PP(B_1\cap K)+o(n^{-2})
=
m\PP(B_{1.1} \cap K) 
+ o(n^{-2}),
\\
\label{B+++} 
\PP(B \cap K)
&=&
\PP(B_1\cap K)
+
\PP(B_2\cap K)
+o(n^{-2})
\\
\nonumber
&=& 
m\PP(B_{1.1} \cap K) + (m)_2\PP(B_{2.(1,2)} \cap K) 
+o(n^{-2}).
 \end{eqnarray}
In the remaining part of the proof we evaluate the probabilities 
\begin{displaymath}
 p_1:=\PP(B_{1.1} \cap K)
\quad
{\rm{and}}
\quad
p_2:=\PP(B_{2.(1,2)} \cap K).
\end{displaymath}
We shall show that 
\begin{equation}\label{21+3}
(nm)^{3/2}p_1=a(k)+o(1)
\qquad
{\rm{and}} 
\qquad
(nm)^2p_2=b(k)+o(1).
\end{equation}
Finally,  invoking (\ref{21+3}) in (\ref{A+++}), (\ref{B+++}) we 
obtain (\ref{A1+++}), (\ref{B1+++}) thus proving (\ref{T1-21-1}).

\smallskip

It remains to prove (\ref{21+3}). For convenience we divide the proof into three steps.
For this part of the proof we need some more notation.
Let $d_1^*$ (respectively $d_2^*$) denote the number of neighbors of $v_1$ 
in $V^*=\{v_4,v_5,\cdots, v_n\}$  witnessed by the attribute $w_1$ 
(respectively $w_2$). Let $d'_1$ (respectively $d'_2$) denote 
the number of neighbors of $v_1$ in $V^*$ witnessed by some attributes from 
$W'_1=\{w_2,w_3,\dots, w_m\}$ (respectively $W'_2=\{w_3,w_4,\dots, w_m\}$). 

{\it Step 1}. We firstly show that
\begin{eqnarray}\label{AA++}
&&
p_1=\PP\bigl(B_{1.1}\cap\{d_1^*+d'_1=k-2\}\bigr)+O(n^{-4}),
\\
\label{BB++}
&&
p_2=\PP\bigl(B_{2.(1,2)} \cap \{d_1^*+d_2^*+d'_2=k-2\}\bigr)+O(n^{-5}).
\end{eqnarray} 

To show (\ref{AA++}) we count neighbors of $v_1$ in $V^*$. The number of such 
neighbors
is denoted $d^*(v_1)$. We have 
$d^*(v_1)=d_1^*+d'_1-d_0$, where
$d_0$  is the number of neighbors of $v_1$ witnessed by $w_1$ and by some 
attribute(s) 
$w_i\in W'_1$ simultaneously.
Combining the inequality 
\begin{displaymath}
d_0 \le \sum_{j=4}^n 
\left( 
  {\mathbb I}_{1j} {\mathbb I}_{11} \sum_{i=2}^m{\mathbb I}_{ij} 
    {\mathbb I}_{i1} 
\right)
\end{displaymath}
with  Markov's inequality we obtain
\begin{displaymath}
\PP\bigl(B_{1.1}\cap\{d_0\ge 1\})
\le
\E {\mathbb I}_{B_{1.1}}d_0
\le
(n-3)(m-1)\E{\mathbb I}_{B_{1.1}}{\mathbb I}_{14}{\mathbb I}_{11}
  {\mathbb I}_{24} {\mathbb I}_{21}. 
\end{displaymath}
Furthermore, invoking the inequality
\begin{displaymath}
 \E{\mathbb I}_{B_{1.1}}{\mathbb I}_{14}{\mathbb I}_{11}{\mathbb I}_{24}
 {\mathbb I}_{21}
=
\E p_{11} p_{12} p_{13} p_{14} p_{21} p_{24}
\le a_2a_4b_1^2b_2^2(nm)^{-3}
\end{displaymath}
we obtain 
$\PP\bigl(B_{1.1}\cap\{d_0\ge 1\})=O(n^{-4})$. Now (\ref{AA++})
 follows from the fact that the event  $B_{1.1}$ implies  $d(v_1)=d^*(v_1)+2$.

The proof of  (\ref{BB++}) is almost the same.
We color 
$w_1$ red, $w_2$ green and all $w_i\in W_2'$ we color yellow.
Let $d'_0$  denote the number of neighbors of $v_1$ witnessed 
by at least two attributes of different colors.
Note that the
number $d^*(v_1)$ of
neighbors of $v_1$ in $V^*$
satisfies, by inclusion-exclusion,
\begin{equation}\label{dddd}
d_1^*+d_2^*+d'_2-2d_0'\le d^*(v_1)\le d_1^*+d_2^*+d'_2.
\end{equation}
We combine the inequality
\begin{displaymath}
d_0'
\le 
\sum_{j=4}^n
\left(
{\mathbb I}_{11}{\mathbb I}_{1j}{\mathbb I}_{21}{\mathbb I}_{2j}
+
({\mathbb I}_{11}{\mathbb I}_{1j}+{\mathbb I}_{21}{\mathbb I}_{2j})
\sum_{i=3}^m{\mathbb I}_{i1}{\mathbb I}_{ij}
\right)
\end{displaymath}
with the identity ${\mathbb I}_{B_{2.(1,2)}}{\mathbb I}_{11}{\mathbb I}_{21}={\mathbb I}_{B_{2.(1,2)}}$ and obtain, by Markov's inequality
and symmetry, that
\begin{displaymath}
\PP\bigl(B_{2.(1,2)}\cap\{d_0'\ge 1\})
\le
\E {\mathbb I}_{B_{2.(1,2)}}d_0'
\le
(n-3)\E{\mathbb I}_{B_{2.(1,2)}}
\Bigl(
{\mathbb I}_{14}{\mathbb I}_{24} 
+
2(m-2){\mathbb I}_{14}{\mathbb I}_{31}{\mathbb I}_{34}
\Bigr). 
\end{displaymath}
Furthermore, invoking the inequalities
\begin{eqnarray}\nonumber
&&
\E {\mathbb I}_{B_{2.(1,2)}}{\mathbb I}_{14}{\mathbb I}_{24}
=
\E p_{11}p_{12}p_{14}p_{21}p_{23}p_{24}
\le
a_3^2b_1^2b_2^2(mn)^{-3},
\\
\nonumber
&&
 \E
 {\mathbb I}_{B_{2.(1,2)}}
 {\mathbb I}_{14}
 {\mathbb I}_{31}
 {\mathbb I}_{34}
=
\E p_{11}p_{12}p_{14}p_{21}p_{23}p_{31}p_{34}
\le
a_2^2a_3b_1^2b_2b_3(mn)^{-7/2}
\end{eqnarray}
we obtain
$\PP\bigl(B_{2.(1,2)}\cap\{d'_0\ge 1\})=O(n^{-5})$. 
Now (\ref{BB++}) follows from 
(\ref{dddd}) and the identity $d(v_1)=d^*(v_1)+2$.


{\it Step 2}. We secondly show that
\begin{eqnarray}\label{LL20-5}
&&
(nm)^{3/2}p_1
=
b_1^2
\E \Bigl(X_1^3Y_1\PP\bigl(\Lambda_1+d_*=k-2\, \bigr|\, 
X_1,Y_1\bigr)\Bigr)
+
o(1),
\\
\label{LL20-6}
&&
(nm)^2p_2
=
b_1^2
\E\Bigl( X_1^2X_2^2Y_1\PP\bigl(\Lambda_1+\Lambda_2+d_*=k-2\, 
\bigr|\, X_1,X_2,Y_1\bigr)\Bigr)
+o(1).
\
\end{eqnarray}

 Let us prove (\ref{LL20-5}). We have 
\begin{eqnarray}
\label{LL20-3}
\PP\bigl(B_{1.1}\cap\{d_1^*+d'_1=k-2\}\bigr)
&
=
&
\E \bigl(p_{11}p_{12}p_{13}\PP^*(d_1^*+d_1'=k-2)\bigr)
\\
\nonumber
&
=
&
\E \bigl( \lambda_{11}\lambda_{12}\lambda_{13}\PP^*(d_1^*+d_1'=k-2)\bigr)
+o((nm)^{-3/2})
\end{eqnarray}
and
\begin{eqnarray}\label{LL20-4}
(nm)^{3/2}\E \bigl( \lambda_{11}\lambda_{12}\lambda_{13}\PP^*(d_1^*+d_1'=k-2)\bigr)
&
=
&
b_1^2\E\bigl( X_1^3Y_1\PP^*(d_1^*+d_1'=k-2)\bigr)
\\
\nonumber
&
=
&
b_1^2\E\bigl( X_1^3Y_1\PP^*(d_*+\Lambda_1=k-2)\bigr)+o(1).
\end{eqnarray}
Here (\ref{LL20-4}) follows from Lemma \ref{LL}, by Lebesgue's dominated convergence theorem. Furthermore, (\ref{LL20-3}) follows from the inequalities
\begin{displaymath}
\lambda_{11}\lambda_{12}\lambda_{13}
\ge 
p_{11}p_{12}p_{13}
\ge 
\lambda_{11}\lambda_{12}\lambda_{13}
\bigl(
1-{\mathbb I}_{\{\lambda_{11}>1\}}
-{\mathbb I}_{\{\lambda_{12}>1\}}
-{\mathbb I}_{\{\lambda_{13}>1\}}
\bigr)
\end{displaymath}
combined with the simple bound
\begin{displaymath}
\E \Bigr(\lambda_{11}\lambda_{12}\lambda_{13}
(
{\mathbb I}_{\{\lambda_{11}>1\}}
+{\mathbb I}_{\{\lambda_{12}>1\}}
+{\mathbb I}_{\{\lambda_{13}>1\}}
)
\Bigr)
=o\Bigr( \E \bigr(\lambda_{11}\lambda_{12}\lambda_{13}\bigr)\Bigr)
=
o\bigr((nm)^{-3/2}\bigr).
\end{displaymath}
Note that (\ref{AA++}), (\ref{LL20-3}), (\ref{LL20-4}) imply 
(\ref{LL20-5}).

The proof of (\ref{LL20-6}) is much the same.
We have 
\begin{eqnarray}\nonumber
&&
\PP\bigl(B_{2.(1,2)} \cap \{d_1^*+d_2^*+d'_2=k-2\}\bigr)
\\
\nonumber
&&
=
\E \bigl(p_{11}p_{12}p_{21}p_{23}
\PP^{**}(d_1^*+d_2^*+d'_2=k-2)\bigr)
\\
\nonumber
&&
=
\E \bigl( \lambda_{11}\lambda_{12}\lambda_{21}\lambda_{23}
\PP^{**}(d_1^*+d_1'=k-2)\bigr)
+o((nm)^{-2})
\end{eqnarray}
and
\begin{eqnarray}\nonumber
&&
(nm)^2
\E \bigl( \lambda_{11}\lambda_{12}\lambda_{21}\lambda_{23}
\PP^{**}(d_1^*+d_2^*+d'_2=k-2)\bigr)
\\
\nonumber
&&
=
b_1^2
\E\bigl( X_1^2X_2^2Y_1^2\PP^{**}(d_1^*+d_2^*+d'_2=k-2)\bigr)
\\
\nonumber
&&
=
b_1^2
\E\bigl( X_1^2X_2^2Y_1^2
\PP^{**}(d_*+\Lambda_1+\Lambda_2=k-2)\bigr)+o(1).
\end{eqnarray}

{\it Step 3}. In this final step we show that 
\begin{eqnarray}\label{LL20-7}
&&
\E \bigl(X_1^3Y_1\PP^*(d_*+\Lambda_1=k-2)\bigr)
=
a_3b_1\PP(d_*^{(1)}+\Lambda_1^{(3)}=k-2).
\\
\label{LL20-8}
&&
\E\bigl( X_1^2X_2^2Y_1\PP^{**}(d_*+\Lambda_1+\Lambda_2=k-2)\bigr)
\\
\nonumber
&&
=
a_2^2b_2\PP(d_*^{(2)}+\Lambda_1^{(2)}+\Lambda_2^{(2)}=k-2).
\end{eqnarray}

In the proof we use the observation that
\begin{eqnarray}\nonumber
\E
\Bigl(
Y_1^r
\PP_{Y_1}(d_*=s)
\Bigr)
&
=
&
\E
\sum_{i\ge 0}
\left(
Y_1^r
\PP_{Y_1}(\Lambda_0=i)
\PP\left(\sum_{j=0}^i\tau_j=s\right)
\right)
\\
\nonumber
&
=
&
\sum_{i\ge 0}
\left( b_r\PP\bigl(\Lambda_0^{(r)}=i\bigr)
\PP\left(\sum_{j=0}^i\tau_j=s\right)
\right)
\\
\nonumber
&
=
&
b_r\PP\bigl(d_*^{(r)}=s\bigr).
\end{eqnarray}

To show (\ref{LL20-7}) we write the quantity on the left in the form
\begin{eqnarray}\nonumber
&&
\E
\left(
X_1^3Y_1
\sum_{s+t=k-2}
\PP_{Y_1}\bigl(d_*=s\bigr)
\cdot
\PP_{X_1}(\Lambda_1=t)
\right)
\\
\nonumber
&&
=
\sum_{s+t=k-2}b_1\PP\bigl(d_*^{(1)}=s\bigr)\cdot a_3\PP\bigl(\Lambda_1^{(3)}=t\bigr)
\\
\nonumber
&&
=
b_1a_3\PP\bigl(d_*^{(1)}+\Lambda_1^{(3)}=k-2\bigr).
\end{eqnarray}

To show (\ref{LL20-8}) we write the quantity on the left in the form
\begin{eqnarray}\nonumber
&&
\E
\left(
X_1^2X_2^2Y_1^2
\sum_{s+t+u=k-2}
\PP_{Y_1}\bigl(d_*=s\bigr)
\cdot
\PP_{X_1}(\Lambda_1=t)
\cdot
\PP_{X_2}(\Lambda_2=u)
\right)
\\
\nonumber
\quad
&&
=
\sum_{s+t+u=k-2}b_2\PP\bigl(d_*^{(2)}=s\bigr)
\cdot 
a_2\PP\bigl(\Lambda_1^{(2)}=t\bigr)
\cdot
a_2
\PP\bigl(\Lambda_2^{(2)}=u\bigr)
\\
\nonumber
&&
=
\
b_2a_2^2
\PP\Bigl(d_*^{(2)}+\Lambda_1^{(2)}+\Lambda_2^{(2)}=k-2\Bigr).
\end{eqnarray}
\qed

{\it Proof of Theorem \ref{T0}.} 
In the proof we use shorthand notation
 ${\tilde A}(k)=\PP(d_*^{(1)}+\Lambda_1^{(3)}\ge k)$
and 
${\tilde B}(k)=\PP(d_*^{(2)}+\Lambda_1^{(2)}+\Lambda_2^{(2)}\ge k)$.
Given two positive functions $f(t)$ and $g(t)$ we denote $f(t)\simeq g(t)$ whenever
$f(t)/g(t)\to 1$ as $t\to+\infty$. 

Using asymptotic formulas for the tail probabilities of randomly stopped sums 
$d_*^{(r)}$ reported in \cite{AleskevicieneLeipusSiaulys}, 
and the formulas for the tail probabilities 
of $\Lambda_k^{(r)}$ shown in Lemma \ref{LB1}, we obtain 
\begin{eqnarray}
\nonumber 
\PP(d_*^{(1)}\ge k)
&
\simeq 
&
c_Y
\frac{\gamma}{\gamma-1}
a_2^{\gamma-1}
b_1^{\gamma-2}
k^{1-\gamma},
\\
\label{dKetv1}
\PP(d_*^{(2)}\ge k)
&
\simeq
&
c_Y
\frac{\gamma}{\gamma-2}
a_2^{\gamma-2}
b_1^{\gamma-2}
b_2^{-1}
k^{2-\gamma},
\\
\nonumber
\PP(\Lambda_1^{(r)}\ge k)
&
\simeq 
&
c_X
\frac{\alpha}{\alpha-r}
\beta^{(r-\alpha)/2}
a_r^{-1}
b_1^{\alpha-r}
k^{r-\alpha},
\qquad
r=2,3.
\end{eqnarray}
Next we combine these asymptotic formulas with the aid of Lemma \ref{LB2}.  
We have
\begin{eqnarray}\nonumber
&& {\tilde A}(k)\simeq
\PP(d_*^{(1)}\ge k),
\qquad
\qquad
\qquad
\quad
\
{\rm{for}}
\quad  
\alpha>\gamma+2,
\\
\label{dKetv2}
&& {\tilde A}(k)\simeq
\PP(d_*^{(1)}\ge k)+\PP(\Lambda_1^{(3)}\ge k),
\quad
{\rm{for}}
\quad  
\alpha=\gamma+2,
\\
\nonumber
&& {\tilde A}(k)\simeq
\PP(\Lambda_1^{(3)}\ge k),
\quad
\qquad
\qquad
\qquad
\
{\rm{for}}
\quad  
\alpha<\gamma+2
\end{eqnarray}

and
\begin{eqnarray}\nonumber
&& {\tilde B}(k)\simeq
\PP(d_*^{(2)}\ge k),
\qquad
\qquad
\qquad
\qquad
\qquad
\qquad
\quad
\
\
{\rm{for}}
\quad  
\alpha>\gamma,
\\
\label{dKetv3}
&& {\tilde B}(k)\simeq
\PP(d_*^{(2)}\ge k)+\PP(\Lambda_1^{(2)}\ge k)+\PP(\Lambda_2^{(2)}\ge k),
\quad
{\rm{for}}
\quad  
\alpha=\gamma,
\\
\nonumber
&& {\tilde B}(k)\simeq
\PP(\Lambda_1^{(2)}\ge k)+\PP(\Lambda_2^{(2)}\ge k),
\quad
\qquad
\qquad
\qquad
\
{\rm{for}}
\quad  
\alpha<\gamma.
\end{eqnarray}
Finally, from (\ref{dKetv1}), (\ref{dKetv2}), (\ref{dKetv3})
we
derive
(\ref{T0+}).
\qed


\section{Auxiliary lemmas}

Let ${\tilde d}_1^*$ (respectively ${\tilde d}_2^*$) denote the number vertices 
in $V^*=\{v_4,v_5,\cdots, v_n\}$ linked to   the attribute $w_1$ 
(respectively $w_2$). 
Let $x_1,x_2, y_1\ge 0$. For $k=1,2$, 
let ${\tilde d}_k$, ${\tilde \Lambda}_k$ denote the random variables 
${\tilde d}_k^*$, $\Lambda_k$ conditioned on the event $X_k=x_k$ (to 
get ${\tilde d}_k$, ${\tilde \Lambda}_k$  we replace $X_k$ by a non-
random number $x_k$ in the definition of 
${\tilde d}_k^*$, $\Lambda_k$).
Let ${\hat d}_1$, ${\hat d}_2$ and ${\hat d}_*$ denote the random 
variables $d'_1$,  $d'_2$ and $d_*$ conditioned on the event $Y_1=y_1$ 
(to get ${\hat d}_1$, ${\hat d}_2$ and ${\hat d}_*$  we replace $Y_1$ 
by a non-random number $y_1$ in the definition of $d'_1$,  $d'_2$ 
and $d_*$).

\begin{lem}\label{LL} Let $\beta>0$. Let $n,m\to+\infty$. Assume that $m/n\to\beta$. Assume that $\E X_i^2<\infty$ and $\E Y_j<\infty$. 
For any $x_1,x_2,y_1\ge 0$  and $s,t,u =0,1,2,\dots$, we have
\begin{equation}\label{LL1}
\PP({\hat d}_1=s, \, {\tilde d}_1=t)
\to
\PP({\hat d}_*=s,\, {\tilde \Lambda}_1=t)
=
\PP({\hat d}_*=s)\PP({\tilde \Lambda}_1=t),
\end{equation}
\begin{eqnarray}\nonumber
\PP({\hat d}_2=s, \, {\tilde d}_1=t, \, {\tilde d}_2=u)
&
\to
& 
\PP({\hat d}_*=s,\, {\tilde \Lambda}_1=t,\,{\tilde \Lambda}_2=u)
\\
\label{LL2}
&
=
&
\PP({\hat d}_*=s)\PP({\tilde \Lambda}_1=t)\PP({\tilde \Lambda}_2=u).
\end{eqnarray}
\end{lem}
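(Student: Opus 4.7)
The conditional link probability $p_{ij}=\min\{1,X_iY_j/\sqrt{nm}\}$ factorises across the weights, so, given all weights $(X_i)$ and $(Y_j)$, the indicators $\mathbb{I}_{ij}$ are mutually independent. The three random variables in (\ref{LL2}) depend on pairwise disjoint blocks of indicators: $\tilde d_1$ on $\{\mathbb{I}_{1j}:j\ge 4\}$, $\tilde d_2$ on $\{\mathbb{I}_{2j}:j\ge 4\}$, and $\hat d_2$ on $\{\mathbb{I}_{ij},\mathbb{I}_{i1}:i\ge 3,\,j\ge 4\}$. Hence they are conditionally independent given the weights, and $\PP(\tilde d_1=t,\tilde d_2=u,\hat d_2=s)=\E[f^{(n)}_1 f^{(n)}_2 g^{(n)}]$ with $f^{(n)}_k=\PP(\tilde d_k=t_k\mid Y_4,\dots,Y_n)$ and $g^{(n)}=\PP(\hat d_2=s\mid X_{\ge 3},Y_{\ge 4})$. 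The whole proof then reduces to two marginal convergence statements plus an asymptotic-decoupling step; relation (\ref{LL1}) is the analogous argument with only two variables, so I describe only (\ref{LL2}).

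\textbf{Marginal of $\tilde d_k$.} Given $Y_4,\dots,Y_n$, $\tilde d_k$ is a sum of independent Bernoullis with parameters $\min\{1,x_kY_j/\sqrt{nm}\}$ of order $n^{-1/2}$. Classical Poisson approximation (Le\,Cam) gives that the conditional law is within total variation $O(n^{-1/2})$ of $\mathrm{Po}(\mu_n)$ with $\mu_n=x_k(nm)^{-1/2}\sum_{j=4}^n Y_j$; the effect of the $\min\{1,\cdot\}$ truncation is negligible since $\E Y_1<\infty$ forces $\PP(x_k Y_j>\sqrt{nm})=o(n^{-1})$. The law of large numbers yields $\mu_n\to x_k\beta^{-1/2}b_1$ in probability, whence $f^{(n)}_k\to\PP(\tilde\Lambda_k=t_k)$ in probability, with a deterministic limit.

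\textbf{Marginal of $\hat d_k$.} I would first replace $\hat d_k$ by the double sum $\sum_{i\in W'_k}\mathbb{I}_{i1}N_i$ with $N_i=\sum_{j=4}^{n}\mathbb{I}_{ij}$; the discrepancy counts vertices of $V^*$ witnessed by two different attributes of $W'_k$ and has expectation $O(nm^2(nm)^{-2})=O((nm)^{-1})$ under $\E X_i^2<\infty$. The summands $\mathbb{I}_{i1}N_i$ are independent across $i$. The number of contributing indices $\sum_i\mathbb{I}_{i1}$ is a sum of small-probability Bernoullis whose mean converges to $y_1\beta^{1/2}a_1$, hence converges in law to $\tilde\Lambda_0$. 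Conditional on $\mathbb{I}_{i1}=1$ the weight $X_i$ is asymptotically size-biased by $X_i/a_1$, and Step 2 applied with the roles of $i$ and $j$ swapped gives $N_i\mid X_i\to \Lambda_1\mid X_1$ in distribution, so the conditional law of $N_i\mid\mathbb{I}_{i1}=1$ matches that of $\tau_1$. A standard thinning / compound-Poisson argument then identifies the joint limit of $(\sum_i\mathbb{I}_{i1},\,N_{i_1},N_{i_2},\dots)$ over the contributing indices and yields $\hat d_k\to\sum_{j=1}^{\tilde\Lambda_0}\tau_j=\hat d_*$; in particular $\E g^{(n)}=\PP(\hat d_k=s)\to\PP(\hat d_*=s)$.

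\textbf{Assembling and main obstacle.} Since $f^{(n)}_1 f^{(n)}_2\to\PP(\tilde\Lambda_1=t)\PP(\tilde\Lambda_2=u)$ in probability and is bounded by $1$, bounded convergence gives $\E[f^{(n)}_1 f^{(n)}_2 g^{(n)}]=\PP(\tilde\Lambda_1=t)\PP(\tilde\Lambda_2=u)\,\E g^{(n)}+o(1)$, and combining with the limit of $\E g^{(n)}$ from Step 3 proves (\ref{LL2}). The hardest piece is Step 3: the indicators $\mathbb{I}_{i1}$ and the counts $N_i$ share the weight $X_i$, and different $N_i$'s share the $Y_j$'s, so the compound-Poisson picture is not immediate and requires uniform control of the conditional Poisson approximation for $N_i\mid X_i$ as well as a careful size-biasing argument that survives the truncation in $p_{ij}$. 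The hypotheses $\E X_1^2<\infty$ and $\E Y_1<\infty$ are precisely what make each of these control steps routine.
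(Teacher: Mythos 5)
Your proposal follows essentially the same route as the paper: conditional independence of the three counts given the weights (they depend on disjoint blocks of link indicators), Le Cam's inequality plus the law of large numbers to replace $\PP_1({\tilde d}_k=t)$ by the deterministic Poisson probability $f_t({\tilde \lambda}_k)$, and bounded convergence to pull that constant out of the expectation against the remaining factor. The one piece you single out as the main obstacle --- the marginal convergence $\PP({\hat d}_k=s)\to\PP({\hat d}_*=s)$ --- is not proved in the paper either: it is quoted from statement (ii) of Theorem 1 of Bloznelis and Damarackas (2013), so your compound-Poisson/size-biasing sketch (which is the right picture; note that $\tau$ is exactly the mixed Poisson under the size-biased law of $X$, i.e.\ $\Lambda_1^{(1)}$) does not need to be carried out in full here.
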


We remark that (\ref{LL1}) tells us that  random vector
$({\hat d}_1, {\tilde d}_1)$ converges in distribution to the random 
vector 
$({\hat d}_*, {\tilde \Lambda}_1)$. 
Similarly,  (\ref{LL2}) tells us that  random vector
$({\hat d}_1, {\tilde d}_1, {\tilde d}_2)$ converges in distribution 
to the random vector 
$({\hat d}_*, {\tilde \Lambda}_1, {\tilde \Lambda}_2)$. In particular, 
(\ref{LL1})  implies for any $r=0,1,2,\dots$ that 
\begin{displaymath}
\PP({\hat d}_1+ {\tilde d}_1\ge r)\to
\PP({\hat d}_*+ {\tilde \Lambda}_1\ge r)
\qquad
{\rm{and}}
\qquad
\PP({\hat d}_1+ {\tilde d}_1= r)\to
\PP({\hat d}_*+ {\tilde \Lambda}_1= r)
\end{displaymath} 
as 
$n,m\to+\infty$.
(\ref{LL2})
implies that 
\begin{eqnarray}\nonumber
&&
\PP({\hat d}_2+ {\tilde d}_1+{\tilde d}_2\ge r)\to
\PP({\hat d}_*+ {\tilde \Lambda}_1+{\tilde \Lambda}_2\ge r),
\\
\nonumber
&&
\PP({\hat d}_2+ {\tilde d}_1+{\tilde d}_2= r)\to
\PP({\hat d}_*+ {\tilde \Lambda}_1+{\tilde \Lambda}_2= r.
\end{eqnarray}

{\it Proof of Lemma \ref{LL}.} 
 Before the proof we introduce some notation.
Let $\PP_1$ denote the conditional probability given $\{Y_4, Y_5,\dots, Y_n\}$. 
For $a>0$ and $s=0,1,2\dots$ we denote by
$f_s(a)=a^se^{-a}/s!$ the Poisson probability. Below we use the fact that $|f_s(a)-f_s(b)|\le |a-b|$. Furthermore we denote
\begin{displaymath}
{\tilde \lambda}_k=x_k\beta^{-1/2}b_1
\qquad
{\rm{and}}
\qquad
{\tilde \lambda}_{3|k}=\sum_{j=4}^n{\tilde \lambda}_{kj},
\quad
{\tilde \lambda}_{4|k}=\sum_{j=4}^n{\tilde p}_{kj},
\qquad
k=1,2. 
\end{displaymath}
Here ${\tilde p}_{kj}$, ${\tilde \lambda}_{kj}$
are defined in the same way as $p_{kj}$, $\lambda_{kj}$, but with $X_k$ replaced by $x_k$, for $k=1,2$.

Proof of (\ref{LL1}).  We have
\begin{equation}\label{LL18-0}
\PP({\hat d}_1=s, \, {\tilde d}_1=t)
=
\E \PP_1({\hat d}_1=s, \, {\tilde d}_1=t)
=
\E \bigl(\PP_1({\hat d}_1=s)\PP_1({\tilde d}_1=t)\bigr).
\end{equation}
Given $\{Y_4, Y_5,\dots, Y_n\}$, the random variable ${\tilde d}_1$ is a sum of independent Bernoulli random variables. 
We invoke Le Cam's inequality, see, e.g, \cite{Steele},
\begin{equation}\label{LL18-1}
\bigl| \PP_1({\tilde d}_1=t)-f_t({\tilde \lambda}_{4|1})|
\le 
\sum_{j=4}^n{\tilde p}_{1j}^2=:R_1^*
\end{equation}
and use simple inequalities
\begin{eqnarray}\label{LL18-2}
&&
\bigl|
f_t({\tilde \lambda}_{4|1})-f_t({\tilde \lambda}_{3|1})
\bigr|
\le 
|{\tilde \lambda}_{4|1}-{\tilde \lambda}_{3|1}|
\le 
\sum_{j=4}^n{\tilde \lambda}_{1j}
{\mathbb I}_{\{{\tilde \lambda}_{1j}>1\}}
=:R_2^*,
\\
\label{LL18-3}
&&
\bigl|f_t({\tilde \lambda}_{3|1})-f_t({\tilde \lambda}_1)\bigr|
\le 
|{\tilde \lambda}_{3|1}-{\tilde \lambda}_1|
=
x_1\left|\sqrt{n/m}\left(n^{-1}\sum_{j=4}^nY_j\right)-\beta^{-1/2}b_1
\right|.
\
\
\
\
\
\end{eqnarray}
Note that 
$\bigl|f_t({\tilde \lambda}_{3|1})-f_t({\tilde \lambda}_1)\bigr|
\to 0$ almost surely, by the law of large numbers. 
Furthermore, 
\begin{displaymath}
\E R_2^*
=
(n-4)(nm)^{-1/2}x_1\E Y_4{\mathbb I}_{\{x_1Y_4>\sqrt{nm}\}}
=
o(1),
\end{displaymath}
because $\E Y_4{\mathbb I}_{\{x_1Y_4>\sqrt{nm}\}}=o(1)$.
We similarly show that $\E R_1^*=o(1)$. For any 
$\varepsilon\in (0,1)$ 
the inequality 
${\tilde p}_{1j}^2
\le
 {\tilde \lambda}_{1j}
 \Bigl(\varepsilon+{\mathbb I}_{\{{\tilde \lambda}_{1j}>\varepsilon\}}\Bigr)$ implies 
\begin{displaymath}
\E R_1^*=(n-4)\E {\tilde p}_{i4}^2
\le 
(n-4)\E {\tilde \lambda}_{14}
 \Bigl(\varepsilon+{\mathbb I}_{\{{\tilde \lambda}_{14}>\varepsilon\}}\Bigr)
\le 
(n-4)(nm)^{-1/2}\bigl(x_1b_1\varepsilon +o(1)\bigr).
\end{displaymath}
We obtain the bound  $\E R_1^*\le \beta^{-1/2}x_1b_1\varepsilon+o(1)$, 
which implies $\E R_1^*=o(1)$.

Now it follows from (\ref{LL18-1}), (\ref{LL18-2}), (\ref{LL18-3})
that
\begin{eqnarray}\label{LL18-4}
 \E \bigl(\PP_1({\hat d}_1=s)\PP_1({\tilde d}_1=t)\bigr)
 &
 =
 &
 \E \bigl(\PP_1({\hat d}_1=s)f_t({\tilde \lambda}_1)\bigr)+o(1)
\\
\nonumber
&
=
&
 \PP({\hat d}_1=s)f_t({\tilde \lambda}_1)+o(1).
\end{eqnarray}
Next we use the fact that $\PP({\hat d}_1=s)\to \PP({\hat d}_*=s)$. The proof of this fact repeats literally 
the proof of statement (ii) of Theorem 1 of \cite{BloznelisDamarackas2013}. 
Finally, from (\ref{LL18-0}), (\ref{LL18-4}) we obtain (\ref{LL1}): 
\begin{displaymath}
\PP({\hat d}_1=s, \, {\tilde d}_1=t)
= 
 \E \bigl(\PP_1({\hat d}_1=s)\PP_1({\tilde d}_1=t)\bigr)
 \to 
 \PP({\hat d}_*=s)f_t({\tilde \lambda}_1).
\end{displaymath}

Proof of (\ref{LL2}). It is similar to that of (\ref{LL1}).
We have
\begin{equation}\label{LL20-0}
\PP
\bigl(
{\hat d}_2=s, \, {\tilde d}_1=t,\,{\tilde d}_2=u 
\bigr)
=
\E \bigl(
\PP_1({\hat d}_2=s)\PP_1({\tilde d}_1=t)\PP_1({\tilde d}_2=u)\bigr).
\end{equation}
By the same argument as above (see  (\ref{LL18-1}), (\ref{LL18-2}), (\ref{LL18-3})), we obtain  
\begin{eqnarray}\nonumber
\E \bigl(
\PP_1({\hat d}_2=s)\PP_1({\tilde d}_1=t)\PP_1({\tilde d}_2=u)\bigr)
&
=
&
\E \bigl(
\PP_1({\hat d}_2=s)f_t({\tilde \lambda}_1)\PP_1({\tilde d}_2=u)\bigr) +o(1)
\\
\nonumber
&
=
&
\E \bigl(
\PP_1({\hat d}_2=s)f_t({\tilde \lambda}_1)f_u({\tilde \lambda}_2)\bigr) +o(1)
\\
\label{LL20-1}
&
=
&
f_t({\tilde \lambda}_1)f_u({\tilde \lambda}_2)
\PP({\hat d}_2=s)+o(1).
\end{eqnarray}
Finally, 
we use the fact that $\PP({\hat d}_2=s)\to \PP({\hat d}_*=s)$. The proof of this fact repeats literally 
the proof of  statement (ii) of Theorem 1 of \cite{BloznelisDamarackas2013}. 
Now  from (\ref{LL20-0}), (\ref{LL20-1}) we obtain (\ref{LL2}). 
\qed

\begin{lem}\label{L1} The quantities 
$R_i$, $1\le i\le 4$ defined in (\ref{R_i}) 
satisfy $R_i=O(n^{-3})$. 
\end{lem}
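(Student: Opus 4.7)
The plan is to bound each $R_i$ by a Markov-type estimate, replacing the indicator of every forced edge event $\{w_a \to v_b\}$ by $p_{ab} \le \lambda_{ab} = X_a Y_b/\sqrt{nm}$ and taking expectations. By independence of the $X$'s and $Y$'s, the resulting expectation factors as $(nm)^{-|E|/2}\prod_a a_{r_a}\prod_b b_{s_b}$, where $E$ is the set of bipartite links forced by the configuration and $r_a, s_b$ are the multiplicities of the attribute $w_a$ and the vertex $v_b$ as endpoints of edges in $E$. The moments $a_r, b_s$ for $r,s \le 4$ are finite by hypothesis, so each per-configuration bound is a constant times $(nm)^{-|E|/2} = \Theta(n^{-|E|})$. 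What remains is to count the number of admissible index tuples in each sum (which is $O(n^{I})$, with $I$ the number of free attribute indices) and to verify $n^{I - |E|} = O(n^{-3})$.

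For $R_1$ the pair $\{i,j\}$ forces six distinct edges $w_i, w_j \to v_1, v_2, v_3$, so $I=2$, $|E|=6$, and $R_1 = O(n^{-4})$, which is stronger than needed. For $R_3$ I would first drop the event $K$ and apply the union bound over $\mathcal{C}_3$; each summand has $|E|=6$ with expectation $a_2^3 b_2^3(nm)^{-3}$, and $|\mathcal{C}_3| = O(n^3)$, giving $R_3 = O(n^{-3})$.

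The quantities $R_2$ and $R_4$ demand a short enumeration of overlap patterns. For $R_2$, the pair $\{(i_1,j_1),(i_2,j_2)\}$ either has four distinct coordinates ($I=4$, $|E|=8$, contributing $O(n^{-4})$), shares a coordinate in the same position $i_1=i_2$ or $j_1=j_2$ ($I=3$, $|E|=6$, contributing $O(n^{-3})$), or shares a coordinate across positions $i_1=j_2$ or $j_1=i_2$ ($I=3$, $|E|=7$, contributing $O(n^{-4})$); the dominant second case gives $R_2 = O(n^{-3})$. For $R_4$ I would bound $B_1 \cap B_2 \cap K$ by the union bound over triples $(i,(i',j'))$ with $(i',j') \in \mathcal{C}_2$: three distinct attributes yield $I=3$, $|E|=7$ and a contribution $O(n^{-4})$, while $i=i'$ or $i=j'$ merges two edges of $B_{2.(i',j')}$ into edges of $B_{1.i}$, leaving $I=2$, $|E|=5$ and a contribution of exactly $O(n^{-3})$.

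The only (mild) obstacle is this enumeration of overlap patterns in the $R_2$ and $R_4$ sums; one must verify that no overlap produces $|E|-I < 3$. With the tabulation above that is immediate, and the entire bound reduces to the routine moment bookkeeping just sketched, using only independence of the weights and $\E X_1^4, \E Y_1^4 < \infty$.
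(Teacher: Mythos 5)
Your proposal is correct and follows essentially the same route as the paper: discard the event $K$, apply union bounds, replace each indicator by $p_{ab}\le\lambda_{ab}=X_aY_b/\sqrt{nm}$, factor the expectation into moments $a_r,b_s$ with $r,s\le 4$, and enumerate the index-overlap patterns for $R_2$ and $R_4$ exactly as the paper does. The only cosmetic imprecision is that your cross-position overlap category for $R_2$ (tabulated with $I=3$, $|E|=7$) does not literally cover the swapped pair $\{(i,j),(j,i)\}$, where both cross equalities hold so that $I=2$ and $|E|=6$; since $|E|-I=4\ge 3$ there as well, the conclusion is unaffected.
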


{\it Proof of Lemma \ref{L1}.}
The bound $R_1=O(n^{-3})$ is obtained from the identity
$R_1 ={{m}\choose{2}}  \PP(B_{1.1} \cap B_{1.2})$ and inequalities
\begin{eqnarray}\nonumber
\PP(B_{1.1} \cap B_{1.2}) 
&
=
& \E {\mathbb I}_{11} {\mathbb I}_{12} {\mathbb I}_{13}
    {\mathbb I}_{21} {\mathbb I}_{22} {\mathbb I}_{23}    
\\
\nonumber
&
\le
&
\E \lambda_{11} \lambda_{12} \lambda_{13} 
    \lambda_{21} \lambda_{22} \lambda_{23}
  = a_{3}^2 b_{2}^3(nm)^{-3}. 
  \end{eqnarray}
The bound $R_2=O(n^{-3})$ follows from inequalities
\begin{eqnarray}\nonumber
R_2
&=&
\sum_{\{(i,j),(i,r)\}\subset{\cal C}_2, j\not=r} \PP(B_{2.(i,j)} \cap B_{2.(i,r)})
\\
\nonumber
&
+
& 
\sum_{\{(i,j),(k,j)\}\subset{\cal C}_2, i\not=k} \PP(B_{2.(i,j)} \cap B_{2.(k,j)})
\\
\nonumber
&+& 
\sum_{\{(i,j),(j,i)\}\subset{\cal C}_2} \PP(B_{2.(i,j)} \cap B_{2.(j,i)})
\\
\nonumber
&
+
&
 \sum_{\{(i,j),(k,r)\}\subset{\cal C}_2, i \neq j \neq k \neq r\not=i} 
 \PP(B_{2.(i,j)} \cap B_{2.(k,r)}) 
 \\
 \nonumber
&=&
2^{-1} (m)_3 \PP(B_{2.(1,2)} \cap B_{2.(1,3)}) 
+ 
2^{-1}(m)_3 \PP(B_{2.(1,3)} \cap B_{2.(2,3)}) 
\\
\nonumber
&+&
(m)_2 \PP(B_{2.(1,2)} \cap B_{2.(2,1)}) 
+
2^{-1}(m)_4 \PP(B_{2.(1,2)} \cap B_{2.(3,4)}) 
\\
\nonumber
&=& 
2^{-1}(m)_3 \E p_{11} p_{12} p_{21} p_{23} p_{31} p_{33}
+ 
2^{-1}(m)_3 \E p_{11} p_{12} p_{21} p_{22} p_{31} p_{33} 
\\
\nonumber
&+&
(m)_2 \E p_{11} p_{12} p_{13} p_{21} p_{22} p_{23}
+
2^{-1}(m)_4 \E p_{11} p_{12} p_{21} p_{23} p_{31} p_{32} p_{41} p_{43}
\\
\nonumber
&
\leq
&  (m)_3 \frac{a_{2}^3 b_{1} b_{2} b_{3}}{(nm)^3} 
    + (m)_2 \frac{a_{3}^2 b_{2}^3}{(nm)^3}
    + 2^{-1}(m)_4 \frac{a_{2}^4 b_{2}^2 b_{4}}{(nm)^4}. 
\end{eqnarray}
The bound $R_3=O(n^{-3})$ is obtained  from the inequalities
\begin{eqnarray}\nonumber
\PP(B_3)
&
\le
& 
\sum_{x\in{\cal C}_3} \PP(B_{3.x}) 
=
(m)_3 \PP(B_{3.(1,2,3)}) 
\\
\nonumber
&
=
&
(m)_3 \E p_{11} p_{12} p_{21} p_{23} p_{32} p_{33}
\le 
\frac{(m)_3}{(nm)^3}a_{2}^3 b_{2}^3.
\end{eqnarray}
The bound $R_4=O(n^{-3})$ is obtained  from the inequalities
\begin{eqnarray}\label{B1-B2-1}
\PP(B_1 \cap B_2)  
&
\le
& 
\sum_{y\in{\cal C}_2} \PP(B_1 \cap B_{2.y}) 
= 
 m(m - 1) \PP(B_1 \cap B_{2.(1,2)} ),
\\
\nonumber
\PP(B_{1} \cap B_{2.(1,2)})
&
\le
& 
\PP(B_{1.1} \cap B_{2.(1,2)}) + \PP(B_{1.2} \cap B_{2.(1,2)})
\\
\nonumber
&
+
& 
(m-2) \PP(B_{1.3} \cap B_{2.(1,2)})
\end{eqnarray}
and bounds
\begin{eqnarray}
\nonumber
&&
\PP(B_{1.1} \cap B_{2.(1,2)})
= 
\PP(B_{1.2} \cap B_{2.(1,2)})
= 
\E p_{11} p_{12} p_{13} p_{21} p_{23}
\le
a_{2} a_{3} b_{1} b_{2}^2(nm)^{-5/2},
\\
\nonumber
&&
\PP(B_{1.3} \cap B_{2.(1,2)}) 
= 
\E p_{11} p_{12} p_{21} p_{23} p_{31} p_{32} p_{33}
\le
a_{2}^2 a_{3} b_{2}^2 b_{3}(nm)^{-7/2}. 
\end{eqnarray}
\qed

\begin{lem}\label{LB1} Let $\alpha, c>0$. Let $r$ be an integer and $0\le r<\alpha$. Let $t\to+\infty$. For a non-negative random variable $Z$ satisfying $\PP(Z>t)=(c+o(1))t^{-\alpha}$  we have 
\begin{equation}\label{2017-03-28}
\E \bigl( Z^r{\mathbb I}_{\{Z>t\}}\bigr)
=
(c+o(1))\alpha(\alpha-r)^{-1}t^{r-\alpha}.
\end{equation}
Denote $h_r=\E Z^r$. For a random variable $\Lambda_Z$ with the distribution $\PP(\Lambda^{(r)}_Z=k)=h^{-1}_r\E \bigl(e^{-Z}Z^{k+r}/k!\bigr)$, $k=0,1,2,\dots$, we have
\begin{equation}\label{2017-03_28+}
\PP(\Lambda_Z^{(r)}>t)
=
(1+o(1))h_r^{-1}\E \bigl(Z^r{\mathbb I}_{\{Z>t\}}\bigr)
=
(1+o(1))h_r^{-1}c\alpha(\alpha-r)^{-1}t^{r-\alpha}.
\end{equation}
\end{lem}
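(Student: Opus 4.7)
\textbf{Proof plan for Lemma \ref{LB1}.}

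For the tail moment formula (\ref{2017-03-28}) my plan is the standard integration-by-parts identity
\begin{equation*}
\E\bigl(Z^r{\mathbb I}_{\{Z>t\}}\bigr)=t^r\bar F(t)+r\int_t^\infty s^{r-1}\bar F(s)\,ds,
\qquad \bar F(s):=\PP(Z>s).
\end{equation*}
The hypothesis $\bar F(s)=(c+o(1))s^{-\alpha}$ immediately gives $t^r\bar F(t)=(c+o(1))t^{r-\alpha}$. For the integral I would rescale $s=tu$ and invoke Lebesgue's dominated convergence: the integrand converges pointwise to $c\,u^{r-1-\alpha}$, is dominated (away from $u=1$ one can use $\bar F(s)\le Ks^{-\alpha}$, valid uniformly in $s\ge t_0$), and evaluates to $c\,r/(\alpha-r)\cdot t^{r-\alpha}$. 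Summing the two contributions yields the factor $1+r/(\alpha-r)=\alpha/(\alpha-r)$.

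For the tail of $\Lambda_Z^{(r)}$ in (\ref{2017-03_28+}) I would first exchange sum and expectation,
\begin{equation*}
\PP(\Lambda_Z^{(r)}>t)=h_r^{-1}\sum_{k>t}\E\bigl(e^{-Z}Z^{k+r}/k!\bigr)=h_r^{-1}\E\bigl(Z^r\,\PP(N_Z>t\mid Z)\bigr),
\end{equation*}
where, conditionally on $Z$, the variable $N_Z$ is Poisson with parameter $Z$. The whole task thus reduces to proving $\E\bigl(Z^r\PP(N_Z>t\mid Z)\bigr)\simeq \E\bigl(Z^r{\mathbb I}_{\{Z>t\}}\bigr)$, after which the first part of the lemma closes the argument.

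To establish this equivalence, I would fix a small $\varepsilon>0$ and partition on the three ranges $Z\le t(1-\varepsilon)$, $t(1-\varepsilon)<Z\le t(1+\varepsilon)$, $Z>t(1+\varepsilon)$. The Chernoff bound $\PP(\mathrm{Poisson}(\lambda)>t)\le \exp\{t-\lambda-t\ln(t/\lambda)\}$ yields $\PP(N_Z>t\mid Z)\le e^{-c_1(\varepsilon)t}$ uniformly on the low range, so that piece contributes at most $h_r e^{-c_1(\varepsilon)t}=o(t^{r-\alpha})$. Symmetrically, the lower-tail Poisson Chernoff inequality gives $\PP(N_Z>t\mid Z)=1-O(e^{-c_2(\varepsilon)t})$ uniformly on the high range, so that piece equals $\E(Z^r{\mathbb I}_{\{Z>t(1+\varepsilon)\}})(1+o(1))$. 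The middle piece is at most $\E(Z^r{\mathbb I}_{\{t(1-\varepsilon)<Z\le t(1+\varepsilon)\}})$, which by the first part equals $c\alpha(\alpha-r)^{-1}\bigl[(1-\varepsilon)^{r-\alpha}-(1+\varepsilon)^{r-\alpha}\bigr]t^{r-\alpha}+o(t^{r-\alpha})$. Sending $\varepsilon\to 0$ sandwiches $\E\bigl(Z^r\PP(N_Z>t\mid Z)\bigr)$ between $(1+o(1))\E(Z^r{\mathbb I}_{\{Z>t\}})$ from above and from below.

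The main obstacle is keeping the Poisson concentration uniform in $Z$ on the bulk $\{Z\le t(1-\varepsilon)\}$: we need exponential-in-$t$ decay that survives multiplication by $Z^r$, whose expectation is merely finite because $r<\alpha$. The Chernoff bound above delivers this uniformly because its exponent $t\bigl(1-\lambda/t-\ln(t/\lambda)\bigr)$ is at most $-c(\varepsilon)t$ with $c(\varepsilon)>0$ throughout the region $\lambda/t\le 1-\varepsilon$.
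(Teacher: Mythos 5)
Your proposal is correct and follows essentially the same route as the paper: integration by parts against $\bar F$ for the tail-moment formula, then the identity $\PP(\Lambda_Z^{(r)}>t)=h_r^{-1}\E\bigl(Z^r\PP(N_Z>t\mid Z)\bigr)$ with a three-range split at $t(1\mp\varepsilon)$ and Poisson Chernoff bounds on the outer ranges. The only cosmetic difference is that you keep $\varepsilon$ fixed and sandwich as $\varepsilon\to0$, whereas the paper takes $\varepsilon=t^{-1/3}$ so the middle term is directly $o(t^{r-\alpha})$; both are valid.
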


{\it Proof of Lemma \ref{LB1}.} 
Denote $F(x)=\PP(Z\le x)=1-{\bar{F}}(x)$.
To show (\ref{2017-03-28}) for $r=1,2,\dots$ we apply integration by parts formula for the Lebesgue-Stieltjes integral 
\begin{eqnarray}
\nonumber
\E (Z^r{\mathbb I}_{\{Z> t\}}\bigr)
&=&
\int_t^{+\infty}x^rdF(x)
=
-
\int_t^{+\infty}x^rd{\bar {F}}(x)
\\
\nonumber
&
=
&
t^r{\bar F}(t)+\int_{t}^{+\infty}rx^{r-1}{\bar F}(x)dx
\end{eqnarray}
and invoke ${\bar{F}}(x) =\PP(Z>x)=(c+o(1))x^{-\alpha}$.

Proof of (\ref{2017-03_28+}). Fix $r$. 
For $s,t,x>0$ and $k=0,1,2,\dots$  we denote
\begin{displaymath}
S^{(k)}_x(s):=\sum_{i<s}e^{-x}x^{i+k}/i!,
\qquad
{\bar S}^{(k)}_x(t):=\sum_{i\ge t}e^{-x}x^{i+k}/i!
\end{displaymath}
For $0<s<x<t$ we will use the  inequalities 
(see \cite{mitzenmacherANDupfal})
\begin{equation}\label{2017-03-28++}
S^{(0)}_x(s)
\le 
e^{s-x}(x/s)^s
\qquad
{\rm{and}}
\qquad
{\bar S}^{(0)}_x(t)
\le 
 e^{t-x}(x/t)^t.
\end{equation}
Given $0<\varepsilon<1$ we write for short $t_{1}=t(1-\varepsilon)$, $t_{2}=t(1+\varepsilon)$ and split the probability
\begin{eqnarray}
\nonumber
&&
\PP(\Lambda_Z^{(r)}>t)
=
h_r^{-1}\E {\bar S}^{(r)}_Z(t)
=
h_r^{-1}(I_1+I_2+I_3),
\qquad
I_k=\E {\bar S}^{(r)}_Z(t){\mathbb I}_{\{Z\in A_k\}},
\\
\nonumber
&&
A_1=[0, t_{1}),
\quad 
A_2=[t_{1}, t_{2}],
\quad
A_3=(t_{2},+\infty).
\end{eqnarray}
We let $\varepsilon=t^{-1/3}$ and evaluate $I_1$, $I_2$ and $I_3$.  The second inequality of (\ref{2017-03-28++}) implies
\begin{equation}\label{2017-03-28+++}
I_1
= 
\E 
\bigl(
Z^r{\bar S}^{(0)}_Z(t){\mathbb I}_{\{Z<t_{1}\}}
\bigr)
\le 
\E
\bigl(
e^{-Z}Z^{t+r}(e/t)^t{\mathbb I}_{\{Z<t_{1}\}}
\bigr)
\le 
e^{-t_{1}}t_{1}^{t+r}(e/t)^t.
\end{equation}
In the last step we used the fact that $z\to e^{-z}z^{t+r}$ is an increasing function on $(0,t_1)$. Furthermore, the quantity on the right of  (\ref{2017-03-28+++}) is less than
\begin{displaymath}
t^re^{t-t_1}(t_1/t)^t
=
t^r e^{\varepsilon t}(1-\varepsilon)^t
=
t^r
e^{t\varepsilon+t\ln(1-\varepsilon)}
\le 
t^re^{-t\varepsilon^2/2}=o(t^{r-\alpha}).
\end{displaymath}
Hence $I_1=o(t^{r-\alpha})$.
While estimating $I_2$ we use the inequalities 
 $t_1^{-\alpha}-t_2^{-\alpha}\le c'\alpha\varepsilon t^{-\alpha-1}$ and  ${\bar S}^{(0)}_x(t)\le 1$. We obtain
\begin{displaymath}
I_2
\le 
\E Z^r{\mathbb I}_{\{t_1\le Z\le t_2\}}
\le
t_2^r
\PP(t_1\le Z\le t_2)
=
t_2^r(t_2^{-\alpha}-t_1^{-\alpha})c(1+o(1))
=
o(t^{r-\alpha}).
\end{displaymath}
We finally evaluate $I_3$. From the identity 
$ S_x^{(0)}(t)+ {\bar S}_x^{(0)}(t)=1$ we obtain 
${\bar S}^{(r)}_x(t)=x^r\bigl(1-S_x^{(0)}(t)\bigr)$. Using this expression we write $I_3$ in the form
\begin{displaymath}
I_3=\E \bigl(Z^r{\mathbb I}_{\{Z>t_2\}}\bigr)+R,
\qquad
{\rm{where}}
\qquad
R=\E\bigl(Z^rS_Z^{(0)}(t){\mathbb I}_{\{Z>t_2\}}\bigr).
\end{displaymath}
Note that (\ref{2017-03-28}) implies
\begin{displaymath}
\E\bigl(Z^r{\mathbb I}_{\{Z>t_2\}}\bigr)
=
(c+o(1))\alpha(\alpha-r)^{-1}t^{r-\alpha}.
\end{displaymath}
We complete the proof by showing that $R=o(t^{r-\alpha})$.
The first inequality of (\ref{2017-03-28++}) implies
\begin{eqnarray}\nonumber
R
&
\le 
&
\E \bigl(Z^{t+r}e^{-Z}(e/t)^t{\mathbb I}_{\{Z>t_2\}}\bigr)
\le t_2^{t+r}e^{-t_2}(e/t)^t
=
t_2^re^{-\varepsilon t}(1+\varepsilon)^t
\\
\nonumber
&
\le
&
t_2^re^{-\varepsilon^2t/4}=o(t^{r-\alpha}).
\end{eqnarray}
In the second inequality we used the fact that the function 
$z\to z^{t+r}e^{-z}$ decreases on $(t_2,+\infty)$. In the last inequality we estimated $\ln (1+\varepsilon)^t=t\ln(1+\varepsilon)\le t(\varepsilon-\varepsilon^2/4)$.
\qed

\noindent In the next lemma we collect several simple facts used in the proof of Theorem~\ref{T0}.

\begin{lem}\label{LB2} Let $\alpha\ge \beta>0$ and $a,b>0$.  Let $t\to+\infty$. Let $\eta$, $\xi$ be independent  non-negative random variables. Assume that  
\begin{equation}\nonumber
\PP(\eta>t)=(a+o(1))t^{-\alpha}  
\qquad
{\rm{and}}
\qquad
\PP(\xi>t)=(b+o(1))t^{-\beta}.
\end{equation}
Put $c=a+b$ for $\alpha=\beta$, and 
$c=b$ for 
$\alpha>\beta$.
We have 
\begin{equation}\label{2017-03-30+}
\PP(\eta+\xi>t)=(c+o(1))t^{-\beta}.
\end{equation}
\end{lem}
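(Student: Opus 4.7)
\textit{Plan for the proof of Lemma \ref{LB2}.} The statement is the classical "one big jump" principle for sums of independent heavy-tailed variables, so the plan is to prove matching lower and upper bounds for $\PP(\eta+\xi>t)$ of order $t^{-\beta}$, with constant $c$.

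For the \textbf{lower bound}, I would use non-negativity to write $\{\eta+\xi>t\}\supseteq\{\eta>t\}\cup\{\xi>t\}$ and apply inclusion--exclusion together with independence,
\begin{equation*}
\PP(\eta+\xi>t)\ge\PP(\eta>t)+\PP(\xi>t)-\PP(\eta>t)\PP(\xi>t).
\end{equation*}
The product is $O(t^{-\alpha-\beta})=o(t^{-\beta})$, while the individual terms contribute $(a+o(1))t^{-\alpha}+(b+o(1))t^{-\beta}$. If $\alpha=\beta$ this gives $(a+b+o(1))t^{-\beta}$, and if $\alpha>\beta$ the first summand is $o(t^{-\beta})$ and what remains is $(b+o(1))t^{-\beta}$. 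In either case we get $\liminf_{t\to\infty} t^{\beta}\PP(\eta+\xi>t)\ge c$.

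For the \textbf{upper bound}, I would fix a small $\varepsilon\in(0,1/2)$ and decompose
\begin{equation*}
\{\eta+\xi>t\}\subseteq\{\eta>(1-\varepsilon)t\}\cup\{\xi>(1-\varepsilon)t\}\cup\{\eta>\varepsilon t,\xi>\varepsilon t\},
\end{equation*}
using the observation that if both $\eta\le(1-\varepsilon)t$ and $\xi\le(1-\varepsilon)t$ yet $\eta+\xi>t$, then each of $\eta,\xi$ must exceed $\varepsilon t$. By independence and the tail hypotheses,
\begin{equation*}
\PP(\eta+\xi>t)\le\PP(\eta>(1-\varepsilon)t)+\PP(\xi>(1-\varepsilon)t)+\PP(\eta>\varepsilon t)\PP(\xi>\varepsilon t).
\end{equation*}
Multiplying by $t^{\beta}$ and letting $t\to\infty$, the third term vanishes (order $t^{-\alpha}\to 0$), the first term vanishes when $\alpha>\beta$ and equals $a(1-\varepsilon)^{-\beta}$ when $\alpha=\beta$, and the second term tends to $b(1-\varepsilon)^{-\beta}$. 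Hence $\limsup_{t\to\infty} t^{\beta}\PP(\eta+\xi>t)\le c(1-\varepsilon)^{-\beta}$, and letting $\varepsilon\downarrow 0$ produces the matching upper bound $\limsup\le c$.

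Combining the two bounds gives $t^{\beta}\PP(\eta+\xi>t)\to c$, which is exactly \eqref{2017-03-30+}. There is no real obstacle here: the argument is entirely elementary, the only point of care being the two-step limit $t\to\infty$ followed by $\varepsilon\downarrow 0$ in the upper-bound analysis, and the harmless fact that the ``double-tail'' cross term is negligible because $\alpha+\beta>\beta$.
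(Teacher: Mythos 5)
Your proof is correct, and it takes a genuinely different (and arguably cleaner) route than the paper's. The paper splits into the two cases $\alpha>\beta$ and $\alpha=\beta$ and treats them with different decompositions: for $\alpha>\beta$ it conditions on $\eta$ on the event $\{\eta<t^{\gamma}\}$ (with $\gamma\in(\beta/\alpha,1)$ chosen so that $\PP(\eta\ge t^{\gamma})=O(t^{-\gamma\alpha})=o(t^{-\beta})$) and uses the asymptotics of $\PP(\xi>t-\eta)$ directly; for $\alpha=\beta$ it passes to $\phi=\max\{\eta,\xi\}$ and $\tau=\min\{\eta,\xi\}$, computes $\PP(\phi>t)$ exactly by independence, and sandwiches $\PP(\tau+\phi>t)$ between $\PP(\phi>t)$ and an upper bound obtained by the same truncation trick applied to $\tau$. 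Your argument instead handles both cases uniformly: the lower bound via the inclusion $\{\eta>t\}\cup\{\xi>t\}\subseteq\{\eta+\xi>t\}$ plus inclusion--exclusion, and the upper bound via the covering $\{\eta+\xi>t\}\subseteq\{\eta>(1-\varepsilon)t\}\cup\{\xi>(1-\varepsilon)t\}\cup\{\eta>\varepsilon t,\,\xi>\varepsilon t\}$, which is valid since if both summands are at most $(1-\varepsilon)t$ yet the sum exceeds $t$, each must exceed $\varepsilon t$. All three resulting terms are controlled exactly as you state, and the two-step limit ($t\to\infty$, then $\varepsilon\downarrow 0$) closes the gap between the constants $c(1-\varepsilon)^{-\beta}$ and $c$. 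What your approach buys is uniformity and the avoidance of any conditioning or case distinction; what the paper's approach buys is a slightly more quantitative handle on the error in the dominant term (useful if one wanted rates), at the cost of a case split and the auxiliary max/min device when the two tail indices coincide.
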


For completeness, we present the proof Lemma \ref{LB2}.

\begin{proof}[Proof of Lemma \ref{LB2}] We first prove (\ref{2017-03-30+}) for $\alpha>\beta$.
Fix  $1>\gamma>\beta/\alpha$ and split the probability 
\begin{equation}\label{2017-03-30+++}
\PP(\eta+\xi>t)=\PP(\eta+\xi>t, \eta<t^{\gamma})+\PP(\eta+\xi>t, \eta\ge t^{\gamma})=:P_1+P_2.
\end{equation}
Here
\begin{eqnarray}\nonumber
P_1
&=&
\E
\bigl(\PP(\eta+\xi>t|\eta)
{\mathbb I}_{\{\eta<t^{\gamma}\}}
\bigr)
=
\E 
\bigl(
(b+o(1))(t-\eta)^{-\beta}
{\mathbb I}_{\{\eta<t^{\gamma}\}}
\bigr)
\\
\nonumber
&=&
(b+o(1))t^{-\beta}\PP(\eta<t^{\gamma})
=
(b+o(1))t^{-\beta}+O(t^{-\beta})\PP(\eta\ge t^{\gamma})
\\
\nonumber
&
=
&
(b+o(1))t^{-\beta}+o(t^{-\beta})
\end{eqnarray} 
and
\begin{equation}\nonumber
P_2\le \PP(\eta\ge t^{\gamma})=O(t^{-\gamma\alpha})
=
o(t^{-\beta}).
\end{equation}
Let us prove (\ref{2017-03-30+}) for $\alpha=\beta$. Fix $0.5<\gamma<1$.  Denote $\phi=\max\{\eta,\xi\}$,
$\tau=\min\{\eta,\xi\}$. We have
\begin{eqnarray}
\label{2017-03-30+1}
&&
\PP(\phi>t)=1-\bigl(1-\PP(\eta> t))(1-\PP(\xi>t)\bigr)
=
(a+b+o(1))t^{-\beta}+O(t^{-2\beta}),
\\
\nonumber
&&
\PP(\tau >t)=\PP(\eta> t)\PP(\xi>t)=O(t^{-2\beta}).
\end{eqnarray}
We write $\PP(\eta+\xi>t)=\PP(\tau+\phi>t)$ and proceed similarly as in (\ref{2017-03-30+++}):
\begin{equation}\label{2017-03-30++++}
\PP(\tau+\phi>t)=\PP(\tau+\phi>t, \tau<t^{\gamma})+\PP(\tau+\phi>t, \tau\ge t^{\gamma})=:P^*_1+P^*_2.
\end{equation}
Here
\begin{eqnarray}\nonumber
P^*_1
&
\le 
&
\PP(t^{\gamma}+\phi>t, \tau<t^{\gamma})
\le
\PP(t^{\gamma}+\phi>t)
=
\PP\bigl(\phi>t(1-o(1))\bigr)
=
(a+b+o(1))t^{-\beta}
\end{eqnarray} 
and
\begin{equation}\nonumber
P_2\le \PP(\tau\ge t^{\gamma})=O(t^{-2\beta\gamma})
=
o(t^{-\beta}).
\end{equation}
Finally, (\ref{2017-03-30+1}) and (\ref{2017-03-30++++}) 
imply 
\begin{displaymath}
(a+b+o(1))t^{-\beta}=\PP(\phi>t)
\le 
\PP(\tau+\phi>t)\le (a+b+o(1))t^{-\beta}.
\end{displaymath}
\end{proof}

%
%

\end{document}